\documentclass[11pt,a4paper]{article}
\usepackage[margin=1in]{geometry}
\usepackage[T1]{fontenc}
\usepackage[utf8]{inputenc}
\usepackage[english]{babel}
\usepackage{amsmath,amssymb,amsfonts,amsthm,bm,mathtools}
\usepackage{graphicx}
\usepackage{xcolor}
\usepackage{booktabs}
\usepackage{caption}
\usepackage{natbib}
\usepackage{hyperref}
\usepackage{tikz}
\usetikzlibrary{arrows.meta,positioning,calc,decorations.pathreplacing}
\usepackage{setspace}
\usepackage{alltt}
\usetikzlibrary{arrows.meta, positioning}
\allowdisplaybreaks 
\tikzset{
  state/.style={rectangle, draw, minimum size=20mm}
}
\newcommand{\nothere}[1]{}

\newtheorem{thm}{Theorem}[section]

\newtheorem{lem}{Lemma}[section]

\title{Debiased inference for stochastic treatment interventions with survival outcomes}

\author{Torben Martinussen, Mark Bech Knudsen,
and Helene Rytgaard\\
Section of Biostatistics, University of Copenhagen, Copenhagen, Denmark
}

\begin{document}

\maketitle


\vspace{1cm}

\centerline{\sc Summary}
Estimating the causal effect of a time-dependent treatment on time to death is challenging. In this paper, we formulate the problem using the illness–death model and focus on a stochastic intervention that modifies the hazard governing the transition from no treatment to treatment initiation. Such an intervention can only be implemented at the level of the observed data, whereas the causally valid intervention is defined at the level of the true data-generating process. We provide conditions under which the practically feasible intervention corresponds to the desired causal intervention in the specific setting.
We first consider an intervention in which treatment is initiated at a fixed time point, 
which may subsequently be varied across the relevant time span.
However, the resulting estimand is not pathwise differentiable, preventing the development of assumption-lean inference. To address this, we instead consider a smoothed intervention that assigns treatment within a time window around the target time point, thereby yielding a parameter amenable to semiparametric analysis.
We derive the corresponding efficient influence function and propose a debiased one-step estimator with desirable robustness properties. We investigate its finite-sample performance in a simulation study and apply the method to the classical Stanford Heart Transplant data, as well as to data on treatment delay among couples with unexplained subfertility seeking intrauterine insemination.

\noindent
{\it Keywords}:  efficient influence function, illness-death setting, stochastic intervention, survival data, time-varying treatment.

\section{Introduction}
\label{intro}
It is well known that judging the effect of a time-dependent exposure on time-to-death is  subtle. A classical example is the Stanford Heart Transplant study where patients with end-stage heart disease were placed on a heart
transplant waiting list. For each patient, follow-up began at the time of entry into the waiting list 
and continued until death or censoring. During follow-up, some patients received a heart
transplant, while others died before transplantation or were censored. Using  heart transplant as information given at time zero creates immortal-time bias, and a standard method to deal with this problem has been to use the Cox-model  with a time-dependent exposure (heart transplant); see, e.g., \cite{kalbfleisch2002statistical}. However, the causal interpretation of results from such analyses is subtle, see \cite{hernan2010hazards,martinussen2020subtleties}. This  is also closely related to the proposal by \cite{snapinn2005illustrating} on an extension of the Kaplan-Meier estimator with time-varying covariate(s), which has also been criticized as lacking proper causal interpretation except for some unrealistic cases, see \cite{sjolander2020cautionary,knudsen2025limitations}. 

There exists causal inference methods related to this problem.
For example, \cite{zhang2012estimating} study the estimation of the effect of a randomized treatment in the presence of a secondary  treatment, with the latter being time-dependent. They use the marginal structural Cox model while exploiting the randomization to improve efficiency.
\cite{hernan2018estimate}  deals with  estimation of  the effect of treatment duration (time-varying) on a survival
outcome using a cloning technique and then performing inverse
probability weighting to account for an imposed dependent censoring mechanism.
\cite{prosepe2025causal} formulates the time-dependent treatment issue using the illness-death model with "illness" corresponding to the state when the treatment takes place. They use working Cox-models to estimate needed transition hazards in order to estimate their proposed estimand.
All of these methods rely heavily on correct specification of certain models used in their proposed estimation procedures, and will give  biased results if these models are misspecified which are likely to happen in reality. 
\cite{niessl2020multistate} and \cite{erdmann2023connection} consider different censoring approaches to handle the situation where a treatment is discontinued and can be restarted later in time. To censor individuals is inefficient use of data and can also lead to bias if the induced censoring is affected by unobserved variables that may also be predictive of death.

Similar to \cite{prosepe2025causal}, we also make use  of the illness–death model as the canonical framework
to evaluate the causal effect of time-dependent treatment.
Our contribution differs conceptually from standard marginal structural models and g-computation approaches in that the intervention is formulated directly through modifications of the transition intensity governing treatment initiation within the illness–death framework. This allows us to distinguish explicitly between interventions defined at the level of observed intensities and interventions defined at the level of the underlying data-generating process. The latter distinction is particularly important in the presence of latent variables affecting both treatment initiation and competing or downstream transitions.
The intervention of interest modifies the treatment initiation hazard, thereby defining a stochastic intervention on the data-generating process. We show that, in the presence of unmeasured variables affecting both the treatment decision and competing or downstream hazards, interventions defined at the level of observed intensities may fail to correspond to causally valid interventions. Focusing on treatment at a fixed time point, we propose  the interventional risk as  the target estimand.

A key technical difficulty is that interventions fixing treatment initiation at an exact time point lead to estimands that are not pathwise differentiable.
To address this, we introduce a smoothed intervention that assigns treatment within a time window around the target time point, yielding a well-defined parameter amenable to semiparametric analysis. We derive the corresponding efficient influence function for this parameter under general conditions, including right censoring. 
Building on this semiparametric representation, we propose a debiased one-step estimator that combines plug-in estimation with influence-function-based correction. The estimator accommodates flexible, data-adaptive estimation of nuisance parameters, including machine learning approaches, and allows for cross-fitting. We establish asymptotic normality and show that the estimator enjoys rate double robustness with respect to the nuisance components. Extensions to settings with baseline covariates are provided, along with simulation studies 
that demonstrate the theoretical properties of the proposed method.
We further apply the method to the Stanford Heart Transplant study as well as to data concerning treatment
delay on a cohort of couples with unexplained subfertility who seek intrauterine insemination \citep{prosepe2025causal}.

\section{A causally valid stochastic intervention in the  illness-death setting}

\subsection{Preamble}

We define the setting in the case where  there is no censoring and use the analogy of the illness-death model with three states, see Figure 1,
so that all individuals start in state 0 (all are untreated at time zero), and treatment can then be initiated at some point in time thereafter corresponding to a transition to state 1. Because the endpoint that we consider in this paper is time-to-death, a transition to this state (state 2) may occur before treatment is initiated. This can also be described using 
 $Z=\{\delta=I(U\leq T),T,U\wedge T\}$,  where $U$ denotes the time to a 01-transition, which is only observed if $\delta=1$. The time to death is denoted by $T.$
 We thus have the following counting processes and at-risk processes: $N_{01}(t)=I(U\wedge T\leq t, \delta=1)$, $N_{02}(t)=I(U\wedge T\leq t, \delta=0)$, $Y_{0}(t)=I(t\leq U\wedge T)$, and $N_0(t)=N_{01}(t)+N_{02}(t)=I(U\wedge T\leq t)$. Also, $N_{12}(t)=I(U<T\leq t)$ and $Y_1(t)=I(U<t\leq T)$.  Observation takes place in the time period $[0,\tau]$ with $\tau<\infty$.

\begin{figure}[h!]
\label{Fig: Illness-death}
    \centering
    \begin{tikzpicture}[
      node distance=2.5cm,
      state/.style={rectangle, draw, align=center, inner xsep = 0.5cm, inner ysep = 0.4cm, font=\sffamily\bfseries},
      arrow/.style={-{Stealth}, thick}
    ]
    
    \node[state] (0) {0 (Not treated)};
    \node[state, above right = of 0] (1) {1 (Start Treatment)};
    \node[state, below right = of 1] (2) {2 (Dead)};
    
    \draw[arrow] (0) -- node [above left] {$\lambda_{01}$} (1);
    \draw[arrow] (1) -- node [above right] {$\lambda_{12}$} (2);
    \draw[arrow] (0) -- node [above] {$\lambda_{02}$} (2);
    
    \end{tikzpicture}
    \caption{Illness-death model.}
    \label{fig:extended_illness_death}
\end{figure}

\vspace{1cm}

 \subsection{Defining the intervention}

We use ${\cal F}_t$ to denote the full filtration, which refers to the filtration containing the information associated with the data generating process (DGP), and we let ${\cal O}_t\subseteq {\cal F}_t$ denote  the observed filtration, where we still consider the situation without censoring. 
We use the notation 
$$
E\{dN_{01}(t)|{\cal F}_{t-}\}=\lambda_{01}(t) dt=Y_{0}(t)\alpha_{01}(t)dt,
$$
where $\lambda_{01}(t)$ is the ${\cal F}_t$-intensity process associated with $N_{01}(t)$, $Y_{0}(t)$ and $\alpha_{01}(t)$ are the corresponding at-risk indicator and transition hazard function, respectively.
We use similar notation for the other two counting processes $N_{02}(t)$ and
$N_{12}(t)$. Note that the intensity process $\lambda_{12}(t|u)=Y_1(t)\alpha_{12}(t|u)$ may depend on when the transition to state 1 takes place, that is, time point $u$. These intensity processes govern the DGP. Note that there may be information, summarized in $V$, contained in  ${\cal F}_{t}$, which affects some (or all) of the transition intensities but this $V$ may not be available in ${\cal O}_t$. The ${\cal O}_t$-transition hazards are denoted by $\alpha_{jk}^{o}$ with $jk\in\{01,02,12\}$.

The intervention we would like to consider, had we access to the DGP,  is the one that sets $\alpha_{01}(t)$ to some $\alpha^{*}_{01}(t)$. It is important to stress that the causally valid intervention takes place at the DGP level. However, since we may not have access to the information given by ${\cal F}_t$, but only the information given by ${\cal O}_t$,  we can only perform the intervention at the ${\cal O}_t$ local characteristics (the intensities), that is setting $\alpha^{o}_{01}(t)$ to $\tilde \alpha^{o}_{01}(t)$.
In related ongoing work (Knudsen, Rytgaard and Martinussen, manuscript in preparation), we show that for such interventions to correspond to the causally valid intervention, no unobserved variable may be a common cause of the treatment initiation hazard and competing or downstream hazards whose structure is kept fixed.
We illustrate this with the following example.
\bigskip

{\bf Example.}
Consider the following situation
\begin{align}
\label{DGP-intens}
    \lambda_{02}(t)=Y_{0}(t) V\alpha_{02}(t), \quad
  \lambda_{01}(t)=Y_{0}(t) V\alpha_{01}(t), \quad
   \lambda_{12}(t)=Y_{1}(t) \alpha_{12}(t),
\end{align}
where $V$ is an unmeasured frailty, which we take, for ease of calculations, as  $V\sim \Gamma(1,1)$, although this is not important for the general argument. 
The intervention we would like to evaluate is setting $\alpha_{01}^{F}(t,V)\equiv V \alpha_{01}(t)$ to something specific such as "never treat", i.e., $\alpha_{01}^{F}(t,V)=0$.
In practice, we can only evaluate the intervention $\alpha_{01}^{o}(t)=0$. Our general result implies that this feasible intervention does not correspond to the one we would like to carry out had we known the DGP, as, in this case, it involves the unmeasured variable $V$.
Specifically, we see that the  unmeasured variable
 is a parent of the decision hazard as well as a parent of the competing hazard $\alpha_{02}^F(t,V)\equiv V\alpha_{02}(t)$. We illustrate this with concrete calculations to see explicitly what goes wrong.
We get in this specific setting,
\begin{align*}
E\{dN_{02}(t)|{\cal O}_{t-}\}&=Y_{0}(t)\frac{\alpha_{02}(t)dt }{1+  A_{02}(t)+  A_{01}(t)},\quad
E\{dN_{01}(t)|{\cal O}_{t-}\}=Y_{0}(t)\frac{\alpha_{01}(t)dt }{1+  A_{02}(t)+  A_{01}(t)}\\
E\{dN_{12}(t)|{\cal O}_{t-}\}&=Y_{1}(t)\alpha_{12}(t),
\end{align*}
where $A_{0j}(t)=\int_0^t\alpha_{0j}(s)\, ds$, $j=1,2$.
\nothere{The  intervention "never treat", i.e.,  setting $  \alpha_{01}(t)=0$ and 
$\alpha^{o}_{01}(t)=\tilde \alpha^{o}_{01}(t)=0$ results in
$\alpha_{01}^{*o}(t)=\tilde \alpha_{01}^{o}(t)=0$, but 
$$
\alpha_{02}^{*o}(t)=\frac{\alpha_{02}(t)}{1+ A_{02}(t)},\quad \tilde \alpha_{02}^{o}(t)=\frac{\alpha_{02}(t)}{1+  A_{02}(t)+ A_{01}(t)},
$$
respectively.}
The intervention ''never treat'' is correctly specified by setting $\alpha^F_{01}(t, U) = \alpha^*_{01} (t)= 0$, resulting in observed hazards
$$
\alpha^{*o}_{01}(t) = 0, \quad \alpha_{02}^{*o}(t)=\frac{\alpha_{02}(t)}{1+ A_{02}(t)}.
$$
Specifying the intervention directly on the observed level amounts to setting $\alpha^o_{01}(t) = \tilde{\alpha}^o_{01}(t) = 0$, resulting in observed hazards
$$
\tilde{\alpha}^o_{01}(t) = 0, \quad \tilde \alpha_{02}^{o}(t)=\frac{\alpha_{02}(t)}{1+  A_{02}(t)+ A_{01}(t)}.
$$
Both formulations agree that the observed $0 \to 1$ hazard is 0, but they disagree on the observed $0 \to 2$ hazard.
We use $\tilde P^{o}$ to denote probability measure under the intervention at the   ${\cal O}_t$-level while $P^*$ is used to denote the probability measure under the intervention at the ${\cal F}_t$-level.
Thus,
$$
\tilde P^{o}(T\leq t)\neq  P^{*o}(T\leq t)
$$
unless $\alpha_{01}(t)=0$ from the outset. Hence, we are  misled by $\tilde P^{o}(T\leq t)$ because of the unmeasured confounder $V$. 
Had the DGP instead been given by \eqref{DGP-intens}, but with $\lambda_{02}(t)=Y_{0}(t) \alpha_{02}(t)$ then 
$$
\alpha_{02}^{*o}(t)=\tilde \alpha_{02}^{o}(t)=\alpha_{02}(t)
$$
and further $\tilde P^{o}(T\leq t)=P^{*o}(T\leq t)$.
\hfill $\square$

In the rest of the paper, we assume that intervening on $\alpha_{01}^{o}$ is causally valid.

\section{Treat at time point $u^*$}
\label{Sect:Treat at time point}
We have in the illness-death setting, without any intervention, that
\begin{align}
\label{Prob-die}
    P^{o}(T\leq t)=&\int_0^t\exp \left\{-A^{o}_{01}(u)-A^{o}_{02}(u)\right\}\alpha^{o}_{02}(u)du\notag\\
    &+\int_0^t\exp \left\{-A^{o}_{01}(u)-A^{o}_{02}(u)\right\}\alpha^{o}_{01}(u)\left\{\int_u^t \exp \left(-\int_u^s \alpha^{o}_{12}(v|u) \mathrm{d} v\right)\alpha^{o}_{12}(s|u) 
    ds\right\}du,
\end{align}
where
$A^{o}_{0j}(t)=\int_0^t\alpha^{o}_{0j}(s)\, ds$, $j=1,2$.
In the following, we consider the intervention corresponding  to treat at time point $u^*$ meaning that we set $\alpha^{o}_{01}(t)$ to zero for $t<u^*$ and then let it spike towards infinity at $u^*$. 
This results in 
$$
\tilde P^{o}\left (T\leq t\right )
= \left\{ \begin{array}{ll}
\int_0^t\exp \left\{-A^{o}_{02}(u)\right\}\alpha^{o}_{02}(u)du & t\leq u^*\\
1-\exp \left\{-A^{o}_{02}
(u^*)\right\}\left\{1-\int_{u^*}^t \exp \left(-\int_{u^*}^s \alpha^{o}_{12}(v|u^*)dv \right )\alpha^{o}_{12}(s|u^*)ds  \right\}& t>u^*
  \end{array} \right.
$$
which is currently the estimand of interest, assuming that there are no unobserved variables that are  the parent of both the decision hazard and competing or downstream hazards as explained in the latter section. We wish to perform assumption lean inference, which requires calculation of the associated efficient influence function (EIF). 
Unfortunately, the above estimand is not pathwise differentiable because of the conditioning in $\alpha^{o}_{12}$, and therefore the corresponding 
 EIF does not exist.
The underlying issue is that the intervention fixes treatment initiation at an exact time point in continuous time. Consequently, the estimand depends on local behavior of the conditional post-treatment hazard around a single point $u^*$, which cannot be regularized through standard first-order perturbations of the observed data distribution.

This lack of pathwise differentiability reflects the instability of interventions assigning treatment at an exact time point in continuous time. Instead, we propose a  smoothed intervention serving two purposes: it yields a practically meaningful treatment regime allowing treatment initiation within a clinically realistic time window, and it regularizes the estimand sufficiently to permit semiparametric inference.
The intervention initiates
treatment in a time window around $u^*$ as follows.
Define $g(u)=\alpha_{01}^{o}(u)e^{-A_{01}^{o}(u)}$ and
$$
\tilde g(u)=g(u)\frac{I(u^*-b<u<u^*+b)}{c^*},
$$
where 
$$
c^*=\int_{u^*-b}^{u^*+b}g(u)\, du,
$$
and $b$ is a bandwidth parameter. At the boundaries (0 and $\tau$), we need to replace $u^*-b$ with $0\vee (u^*-b)$ and $u^*+b$ with $\tau\wedge (u^*+b)$. 
To simplify notation, we use $u_-^*=0\vee (u^*-b)$ and $u_+^*=\tau\wedge (u^*+b)$ throughout the remainder of the paper.
The bandwidth parameter $b$ therefore controls the width of the treatment initiation window and should be viewed as part of the target intervention itself rather than merely a tuning parameter.

Further, let 
$$
\tilde A^{o}_{01}(s)
= \left\{ \begin{array}{lll}
0& s\leq u_{-}^*\\
-\log{\left\{\int_s^{u_+^*}\tilde g(v)dv\right\}} & u_{-}^*<s<u_+^*\\
 \infty& s\geq u_+^*
  \end{array} \right.
$$
or
$$
\tilde \alpha^{o}_{01}(s)
= \left\{ \begin{array}{lll}
0& s\leq u_{-}^*\\
\frac{\tilde g(s)}{\int_s^{u_+^*}\tilde g(v)dv} & u_{-}^*<s<u_+^*\\
 \infty & s\geq u_+^*
  \end{array} \right.
$$
from which  it is seen that $\tilde \alpha^{o}_{01}(s)$ converges to $\infty $ when $s$ goes to $u^*+b$ from the left.
Furthermore, if we let $b$ increase towards infinity, we see that
$\tilde \alpha^{o}_{01}(s)$ converges to $g(s)/e^{-A_{01}^{o}(s)}=\alpha^{o}_{01}(s)$.

The estimand of interest is $\psi_t=\tilde P^{o}\left (T\leq t\right )$, i.e., in \eqref{Prob-die} replace  $ \alpha^{o}_{01}(s)$ with $\tilde \alpha^{o}_{01}(s)$. This gives
$$
\psi_t=\psi_{t}^{-}I(t<u_{-}^*)+\psi_{t}^{+}I(t\geq u_{-}^*),
$$
where
$$
\psi_{t}^{-}=\int_0^{t}e^{-A^{o}_{02}(s)}dA^{o}_{02}(s)
$$
and $\psi_{t}^{+}=\psi_{u_-^*}^{-}+\mu_t+\eta_t$ with 
\begin{align}
\label{mut}
   \mu_t=
    \int\frac{J^*_t(u)}{c^*}H_{01}^{*}(u)\,  e^{- A^{o}_{02}(u)}d A^{o}_{02}(u),\;
    \eta_t=\int\frac{J^*_t(u)}{c^*} \{1-S_1(t|u)\}S_0(u)\alpha_{01}^{o}(u)\, du,
\end{align}
\begin{align*}
    S_0(u)=e^{- A^{o}_{01}(u)-A^{o}_{02}(u)},\quad
S_1(t|u)=\exp \left\{-\int_u^t \alpha^{o}_{12}(v|u) \mathrm{d} v\right\},
\end{align*} 
$H_{01}^{*}(u)=\int_u^{u^*+b}g(v)dv$ and $J^*_t(u)=I\{u_-^*<u<t\wedge u_+^*\}$. We also use the notation 
$S_{0j}(u)=e^{- A^{o}_{0j}(u)}$, $j=1,2$, so $S_0(u)=S_{01}(u)S_{02}(u)$.

 \subsection{Efficient estimation of interventional risk}
\bigskip

In the case where  there is no censoring, 
we have the martingales w.r.t ${\cal O}_{t}$:   $M_{0j}(t)=N_{0j}(t)-\int_0^tY_0(s)\alpha^{o}_{0j}(s)\, ds$, $j=1,2$, $M_{0}(t)=N_{0}(t)-\int_0^tY_0(s)\{\alpha^{o}_{01}+\alpha^{o}_{02}\}(s)\, ds$, $M_{12}(t)=N_{12}(t)-\int_0^tY_1(s)\alpha^{o}_{12}(s|U)\, ds$. 
In reality there will be censoring, and we denote the 
observed data  
$O=\{T\wedge C,U\wedge T \wedge C, \delta_1=I(U\leq T\wedge C),\delta_2=I( T\leq C)\}$. Here, we have used $C$ to denote the censoring variable that has survival function denoted by $K$ and hazard function $\lambda_C$. Note that $K(r|u)$ may depend on the timing $u$ at which treatment is initiated with $r>u$.
We assume that the censoring results in coarsening at random (CAR) as defined in \cite{tsiatis2006semiparametric}.
We refer to the observed data martingales by $M_{01}^{o}(t)$, and similarly for the other martingales. We also modify the counting processes as well as the at-risk processes, and refer to them similarly, i.e., $N_{01}^{o}(t)=I(U\wedge T \wedge C \leq t,\delta_1=1)$.

The EIF's given in Lemma \ref{basic_eif} in the Appendix are the required building blocks to calculate the desired EIF, $D^*_{\psi_t}(O,P) $.
The following result is obtained.
\begin{thm}
   \label{thm:eif} 
The observed data efficient influence function $D^*_{\psi_t}(O,P)$ corresponding to $\psi_t$ 
is $I(t<u_-^*)D^*_{\psi^-_t}(O,P)+I(t\geq u_-^*)D^*_{\psi^+_t}(O,P)$,
where 
$$
D^*_{\psi^-_t}=e^{-A_{02}(t)}\int_0^t\frac{dM_{02}^{o}(u)}{S_0(u)K(u)}, \quad t<u^*-b,
$$
and, for $t\geq u_-^*$, $D^*_{\psi^+_t}(O,P)=D^*_{\psi^-_{u_-^*}}(O,P)+D^*_{\mu_t}(O,P)+D^*_{\eta_t}(O,P)$ with 
\begin{align*}
D^*_{\eta_t}(O,P)=&\int J^*_t(u)h_{\eta}^{(1)}(u,t)\frac{dM_{01}^{o}(u)}{S_0(u)K(u)}
 +\frac{\delta_1J_{t}^*(U)}{c^*K(U)}S_1(t|U)\int _U^t\frac{dM_{12}^{o}(r|U)}{S_1(r|U)K(r|U)}\\
&
-\int_{u_-^*}^{t\wedge u_+^*} \int_0^{u}\frac{dM_0^{o}(v)}{S_0(v)K(v)}
 \{1-S_1(t|u)\}S_0(u)\alpha_{01}^{o}(u)\, du
-\eta_t D^*_{c^*}(O,P)/c^*,\\
D^*_{\mu_{t}}(O,P)=
&\int_{0}^{t\wedge u_+^*} h_{\mu}^{(1)}(u,t) \frac{dM_{01}^{o}(u)}{S_0(u)K(u)} 
+\int_{0}^{t\wedge u_+^*} h_{\mu}^{(2)}(u,t)\frac{dM_{02}^{o}(u)}{S_0(u)K(u)}
-\mu_{t} D^*_{c^*}(O,P)/c^*.
\end{align*}
where $M_0^{o}=M_{01}^{o}+M_{02}^{o}$ and $h_{\eta}^{(1)}(u,t)$,
$h_{\mu}^{(j)}(u,t)$, $j=1,2$, are defined in the proof. Furthermore,
$$
D^*_{c^*}(O,P)=\int\frac{v_{01}(u)dM_{01}^{o}(u)}{S_0(u)K(u)},
$$
where $v_{01}(u)=I(u<u_+^*)S_{01}(u_+^*)-I(u<u_-^*)S_{01}(u_-^*)$.
\end{thm}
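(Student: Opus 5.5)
The plan is to derive the EIF through the functional delta method (chain rule for Gateaux derivatives). We treat $\psi_t$ as a smooth functional of the primitive nuisance quantities $A^{o}_{01}$, $A^{o}_{02}$ and the conditional cumulative hazard $A^{o}_{12}(\cdot|u)$. Their observed-data EIFs under CAR censoring are taken from Lemma \ref{basic_eif} in the Appendix. For a point $s$, the EIF of $A^{o}_{0j}(s)$ is $\int_0^s dM^{o}_{0j}(v)/\{S_0(v)K(v)\}$. The EIF of the conditional quantity $S_1(t|u)$ involves $\delta_1 I(U\in du)$ together with $\int_U^t dM^{o}_{12}(r|U)/\{S_1(r|U)K(r|U)\}$.

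Because $\psi_t$ is an integral over $u$ in the window, the point-mass conditioning on $u$ gets integrated against $dN^{o}_{01}$. This is exactly what restores pathwise differentiability, in contrast with the unsmoothed estimand of Section \ref{Sect:Treat at time point}.

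The steps in order are as follows.

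First, for $t<u_-^*$, perturb $\psi^-_t=1-e^{-A^{o}_{02}(t)}$. Its derivative is $e^{-A_{02}(t)}$ times the EIF of $A_{02}(t)$, which gives $D^*_{\psi^-_t}$ directly.

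Second, write $c^*=S_{01}(u_-^*)-S_{01}(u_+^*)$. Since $g=-dS_{01}/du$, perturbing $S_{01}(s)=e^{-A_{01}(s)}$ gives $-S_{01}(s)\int_0^s dM^{o}_{01}/(S_0K)$. Taking the difference and rewriting as a single integral with the indicators $I(u<u_\pm^*)$ yields $v_{01}$ and hence $D^*_{c^*}$. The quotient rule then produces the terms $-\mu_tD^*_{c^*}/c^*$ and $-\eta_tD^*_{c^*}/c^*$.

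Third, handle the numerator of $\eta_t$, namely $\int J^*_t(u)\{1-S_1(t|u)\}S_0(u)\,dA^{o}_{01}(u)$. There are three perturbation sources.

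\emph{(a)} Perturbing $dA^{o}_{01}(u)$ gives $\{1-S_1(t|u)\}S_0(u)\,dM^{o}_{01}(u)/\{S_0(u)K(u)\}$. This contributes to $h_\eta^{(1)}(u,t)=\{1-S_1(t|u)\}/c^*$ up to normalization.

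\emph{(b)} Perturbing $S_0(u)$ gives $-S_0(u)\int_0^u dM^{o}_0/(S_0K)$, which is the double-integral term.

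\emph{(c)} Perturbing $S_1(t|u)$ uses the conditional EIF. After integrating against $S_0(u)\alpha_{01}(u)\,du$, the density of $U$ at $u$ among those transitioning cancels, leaving $\delta_1J^*_t(U)S_1(t|U)\int_U^t dM_{12}^{o}/(S_1K)$ divided by $c^*K(U)$.

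Fourth, for $\mu_t$, note that $H^*_{01}(u)=S_{01}(u)-S_{01}(u_+^*)$. The integrand $H^*_{01}(u)e^{-A_{02}(u)}\,dA_{02}(u)$ depends on $A_{01}$ and $A_{02}$ only. Perturb each factor. Then use Fubini to exchange the order of integration between the outer $u$ integral and the inner martingale integral $\int_0^u dM/(S_0K)$, collecting everything into the form $\int h^{(j)}_\mu(u,t)\,dM^{o}_{0j}(u)/\{S_0(u)K(u)\}$.

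The resulting $h$'s are explicit. For instance, $h^{(2)}_\mu$ collects the direct $dA_{02}$ term $J^*_tH^*_{01}S_{02}S_0$ minus $S_0(u)\int_{u\vee u_-^*}^{t\wedge u_+^*}H^*_{01}e^{-A_{02}}\,dA_{02}$, all divided by $c^*$. The same Fubini step, applied to the piece from $S_{01}$ in $H^*_{01}$ and in $S_0$, yields $h_\eta^{(1)}$ and $h^{(1)}_\mu$.

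Finally, add $D^*_{\psi^-_{u_-^*}}$ and observe that $\psi_t$ depends on $P$ only through these nuisances. Because the model is nonparametric, with censoring satisfying CAR, the resulting gradient is the unique one and hence efficient.

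The main obstacle is step (c): justifying the EIF of the conditional-on-$U=u$ survival $S_1(t|u)$ when it is integrated over $u$. I would handle it by writing $\int J^*_t(u)S_1(t|u)S_0(u)\alpha_{01}(u)\,du$ as a joint functional of the subdistribution of $(U,T)$. I would then compute its Gateaux derivative along point-mass contaminations directly, verifying that the $u$-density factors cancel so that only the $\delta_1$-weighted term remains. The remaining bookkeeping, namely the Fubini rearrangements producing the $h$-functions and the boundary conventions at $u_\pm^*$, is routine but tedious.
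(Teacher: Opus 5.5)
Your plan follows the paper's own proof. The appendix argument uses exactly the building-block EIFs of Lemma~\ref{basic_eif}, Kennedy's (2022) chain/product/quotient-rule calculus, and Fubini-type algebra to collect terms into $h_\eta^{(1)}$, $h_\mu^{(1)}$, $h_\mu^{(2)}$; this includes the ``pretend $U$ is discrete'' handling of $A_{12}(t|u)$ that you single out in step (c).

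Your explicit $h$'s contain two bookkeeping slips. With the theorem's normalization $\int h\,dM^{o}/(S_0K)$, one needs $c^*h_\eta^{(1)}(u,t)=\{1-S_1(t|u)\}S_0(u)$, whereas you dropped the factor $S_0(u)$. One also needs $c^*h_\mu^{(2)}(u,t)=I(u>u_-^*)H^*_{01}(u)e^{-A_{02}(u)}-\int_{u\vee u_-^*}^{t\wedge u_+^*}H^*_{01}(v)e^{-A_{02}(v)}dA_{02}(v)$, whereas your version carries a spurious factor $S_0(u)$ in both pieces. Your own perturbation computations give the correct forms once the $S_0(u)$ in the denominator is tracked.

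If you carry out the quotient-rule step faithfully, it puts a factor $1/c^*$ on the double-integral term of $D^*_{\eta_t}$. The printed statement omits this factor, but that is a typo in the statement, not a flaw in your argument.
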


The proof is given in  the Appendix. 
Since we have a model-free expression for $\psi_t(P)$, it is tempting to substitute 
all unknown distributional quantities by
data-adaptive estimators (e.g., machine learning based estimators)
to obtain a  plug-in estimator $\hat\psi_t^{{pi}}=\psi_t(P_n)$, where we by $P_n$ mean that unknown quantities in $\psi_t$ are replaced with estimated counterparts. For the current estimand, we can use Nelson-Aalen estimators for $\{A_{01}^{0},A_{02}^{0}\}$, 
so that the main remaining component is the estimation of $S_{1}(r|u)$, the conditional survival function after treatment initiation. While, in principle, this component can be estimated using flexible data-adaptive methods, it is well known that the resulting plug-in estimator may exhibit poor finite-sample performance, as the bias–variance trade-offs involved in estimating $S_{1}(r|u)$ are not aligned with the target parameter $\psi_t$.
To address this, we consider a one-step (debiased) estimator \citep{kennedy2022semiparametric}, given by
$$
\hat\psi_t^{os}=\hat\psi_t^{pi}
+\mathbb{P}_n\left\{D^*_{\psi_t}(O,P_n)\right \},
$$
where  $\mathbb{P}_n\{v(O)\}=n^{-1}\sum_i v(O_i)$ is the empirical measure.
This estimator corrects, to first order, the bias of the plug-in estimator via the efficient influence function.
In the implementation, we assume that all nuisance parameters are replaced by consistent estimators.
This is straightforward for 
 $A_{01}^{0}$ $A_{02}^{0}$,
 which may be estimated using Nelson–Aalen estimators.
 In principle,
 flexible data-adaptive procedures can be used for the estimation of $S_{1}(r|u)$ and the censoring distribution $K(r|u)$.

Theorem \ref{thm:remainder}, whose proof is provided in the Appendix, states the limit distribution of the resulting estimator. Since we will use super/sub-script $n$ to denote estimators, we will from now on drop the superscript $o$ from the observed transition hazard function as well as from their cumulative counterparts.

\begin{thm}\label{thm:remainder}
Consider $n$ i.i.d. replicates of $O=\{T\wedge C,U\wedge T \wedge C, \delta_1=I(U\leq T\wedge C),\delta_2=I( T\leq C)\}$.
Then $\hat\psi_t^{os}$ is asymptotically normally distributed with mean $\psi_t$
and a variance that can be consistently estimated as the sample variance of $D^*_{\psi_t}(O,P_n)$,
provided that the nuisance parameter estimators 
$A^n_{12}(r|u)$  and $K_n(r|u)$ are trained on a separate part of the data, and that   
\begin{align}
\label{robprop}
    \int J_t^*(u)&\left\{\int_u^t\frac{\{K_n(r|u)-K(r|u)\}}{K_n(r|u)}\frac{S_1(r|u)}{S^n_1(r|u)}d\{ A_{12}^n(r|u)- A_{12}(r|u) \}\right\}\nonumber\\
&\times S_1^n(t|u)S_0^n(u)dA^n_{01}(u)=o_p(n^{-1/2}).
\end{align}

\end{thm}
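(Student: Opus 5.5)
The argument uses the standard one-step decomposition. With $P_n$ the estimated nuisances and $P$ the truth, write
$$
\hat\psi_t^{os}-\psi_t=(\mathbb{P}_n-P)D^*_{\psi_t}(O,P)+(\mathbb{P}_n-P)\{D^*_{\psi_t}(O,P_n)-D^*_{\psi_t}(O,P)\}+R_2(P_n,P),
$$
where $R_2(P_n,P)=\psi_t(P_n)-\psi_t(P)+P D^*_{\psi_t}(O,P_n)$ is the second-order remainder.

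\textbf{First two terms.} The first term is a centred i.i.d. average of the EIF from Theorem \ref{thm:eif}. By the CLT it is asymptotically normal with variance $P(D^{*2}_{\psi_t})$.

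For the empirical-process term, the nuisances $A^n_{12}$ and $K_n$ are trained on a separate sample. Conditional on that sample, the term is a mean-zero average whose variance is $n^{-1}\|D^*_{\psi_t}(P_n)-D^*_{\psi_t}(P)\|^2_{L_2(P)}$. This is $o_p(n^{-1})$ provided $D^*(P_n)\to D^*(P)$ in $L_2$, which follows from consistency together with $K$, $S_0$ and $S_1$ being bounded away from zero on $[0,\tau]$. This last requirement is the usual positivity condition, and I will state it.

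The $A_{01},A_{02}$ components are Nelson--Aalen estimators. They are Donsker and $\sqrt n$-consistent, so for them I will use the Donsker argument instead of sample splitting. The same $L_2$ convergence then gives consistency of the sample variance of $D^*_{\psi_t}(O,P_n)$.

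\textbf{Remainder.} The main work is computing $R_2$ and showing that everything except \eqref{robprop} is $o_p(n^{-1/2})$. I will take expectations of each martingale integral in $D^*(P_n)$ under $P$, using $dN^o_{jk}=dM^o_{jk}+Y^o\,dA_{jk}$ and CAR, which gives $E\{Y^o_0(u)\}=S_0(u)K(u)$ and the analogous identity for the $1\to2$ transition. For example,
$$
P\int \frac{h_n(u)\,dM^{o}_{01,n}(u)}{S_0^n(u)K_n(u)}=\int h_n(u)\frac{S_0(u)K(u)}{S_0^n(u)K_n(u)}\,d(A_{01}-A_{01}^n)(u).
$$
These terms are then combined with $\psi_t(P_n)-\psi_t(P)$, handled term by term for $\psi^-$, $\mu_t$, $\eta_t$ and $c^*$.

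\textbf{Duhamel step.} For the parts involving only $A_{01},A_{02}$, I will apply the Duhamel equation to $S^n_{0j}-S_{0j}$ and to ratios such as $\mu_t^n/c^{*n}$. This leaves products of $(A^n_{0j}-A_{0j})$ differences, which are $O_p(n^{-1})$ because the Nelson--Aalen estimators are $\sqrt n$-consistent. The nonlinearity in $1/c^*$ is handled by expanding $1/c^{*n}-1/c^*$ to second order.

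The key term is the $1\to2$ part. There I will use the Duhamel identity
$$
S_1^n(t|u)-S_1(t|u)=-S_1^n(t|u)\int_u^t\frac{S_1(r|u)}{S_1^n(r|u)}\,d(A^n_{12}-A_{12})(r|u).
$$
The linear part of this cancels against the expectation of the $M_{12}^o$ term in $D^*_{\eta_t}(P_n)$, up to the factor $K/K_n$. This leaves exactly
$$
\int J_t^*(u)\int_u^t\frac{K_n-K}{K_n}\,\frac{S_1}{S_1^n}\,d(A^n_{12}-A_{12})\,S^n_1(t|u)S^n_0(u)\,dA^n_{01}(u)/c^{*n}.
$$
This is the quantity in \eqref{robprop} (up to the bounded factor $1/c^{*n}$) and is $o_p(n^{-1/2})$ by assumption. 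Cross terms pairing $(A^n_{01}-A_{01})$ with $(A^n_{12}-A_{12})$ are products of a $\sqrt n$-rate term with a consistent term integrated against a martingale. Sample splitting makes them $o_p(n^{-1/2})$.

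\textbf{Main obstacle.} I expect the hardest part to be the bookkeeping that makes the first-order pieces cancel exactly, given the explicit forms of $h_\eta^{(1)}$ and $h_\mu^{(j)}$ from the proof of Theorem \ref{thm:eif}. In particular, the inner $\int_0^u dM_0^o/(S_0K)$ term has to cancel the derivative of $S_0$ inside $\eta_t$. My plan is to organise this as a von Mises expansion along the path $P_\epsilon=P+\epsilon(P_n-P)$. Since $D^*$ is the pathwise derivative, all linear terms vanish, and only explicit second-order products remain to be listed.
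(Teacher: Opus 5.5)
Your proposal is correct and follows essentially the same route as the paper. You compute $R_n=\psi_t(P_n)-\psi_t(P)+E\{D^*_{\psi_t}(O,P_n)\}$ through the expectation identities for the estimated martingale increments, and let the exact Duhamel identity cancel the $A_{12}$-contribution up to the factor $K/K_n$, which is the paper's device $K/K_n=1-(K_n-K)/K_n$. You then discard the remaining cross-products because $A_{01}$ and $A_{02}$ are Nelson--Aalen estimates, so only \eqref{robprop} survives.

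Your explicit treatment of the CLT and empirical-process terms (sample splitting for $A^n_{12},K_n$, a Donsker argument for the Nelson--Aalen parts, positivity) fills in what the paper leaves implicit. When you do the bookkeeping, rely on that exact cancellation rather than a generic von Mises bound, since the latter would not by itself exclude squared $A_{12}$-error terms that the theorem does not control.
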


In particular, condition \eqref{robprop} highlights a form of rate double robustness between the estimation of the post-treatment survival function $S_1(t|u)$ and the censoring distribution. Thus, sufficiently fast convergence of one component may compensate for slower convergence of the other. In principle, this allows for flexible, data-adaptive estimation of the nuisance parameters, although in practice the estimation of $S_1(t|u)$ and the censoring distribution $K(r|u)$  may be challenging due to its dependence on the treatment initiation time $u$ and that the estimation procedure needs to be able to handle 
delayed-entry;  to our knowledge such fully data-adaptive procedures are not yet  available.


\section{Setting with baseline covariates}
We now extend the method described in the previous section to the setting with baseline covariates that we denote by $L$. All intensity functions are specified given $L$ and the CAR assumption is now conditional on the additional baseline information.
Define $g(u|l)=\alpha_{01}^{o}(u|l)e^{-A_{01}^{o}(u|l)}$ and
$$
\tilde g(u|l)=g(u|l)\frac{I(u_-^*<u<u_+^*)}{c^*(l)},\quad c^*(l)=\int_{u_-^*}^{u_+^*}g(u|l)\, du.
$$
Further, let 
$$
\tilde \alpha^{o}_{01}(s|l)
= \left\{ \begin{array}{lll}
0& s\leq u_-^*\\
\frac{\tilde g(s|l)}{\int_s^{u_+^*}\tilde g(v|l)dv} & u_-^*<s<u_+^*\\
 \infty & s\geq u_+^*
  \end{array} \right.
$$
We consider the intervention where  $\alpha_{01}^{o}(s|l)$ is replaced  with $\tilde \alpha_{01}^{o}(s|l)$, and the resulting
 estimand  is 
$\psi_t=\tilde P^{o}(T\leq t)=\int \tilde P^{o}(T\leq t|l)f_L(l)\, dl $, where
$\tilde P^{o}(T\leq t|l)$ refers to the conditional probability measure under the considered intervention and $f_L(l)$ is the density function corresponding to the marginal distribution of $L$.

Define $\psi_{t}^{-}(l)$, $\mu_t(l)$, $\eta_t(l)$, $\psi_{t}^{+}(l)$ and $\psi_{t}(l)$ as in Section \ref{Sect:Treat at time point} replacing all transition hazards with the conditional ones as defined above, and let $\psi_{t}^{-}=\int \psi_{t}^{-}(l) f_L(l) dl$, and similarly for $\mu_t$, $\eta_t$, $\psi_{t}^{+}$ and $\psi_{t}$.
It follows that 
$$
D^*_{\psi_{t}^{-}}(O,P)=\psi_{t}^{-}(L)-\psi_{t}^{-}+e^{- A^{o}_{02}(t|L)}\int_0^{t}\frac{dM_{02}^{o}(s|L)}{S_0(s|L)K(s|L)}.
$$
For $t<u^*-b$, $\psi_t=\psi_{t}^{-}$, and the corresponding one-step estimator is 
$$
\hat\psi_t^{os}=\mathbb{P}_n\left\{ \psi_{tn}^{-}(L)+e^{- A^{o}_{02,n}(t|L)}\int_0^{t}\frac{dM_{02,n}^{o}(s|L)}{S_0^n(s|L)K_n(s|L)}\right \}
$$
\nothere{
Let  
$$
\mu_{2t}(L)=\frac{1}{c^*(L)}\int J^*_t(u)\int_u^{u^*+b}g(v|L)dv\, e^{- A^{o}_{02}(u|L)}dA^{o}_{02}(u|L)
$$
so that $\mu_{2t}\equiv \int \mu_{2t}(l) f_L(l)dl$ corresponds to the second term on the right-hand side of \eqref{mut}. The term corresponding to the first  term on the right-hand side of \eqref{mut} is handled similarly to what we just derived.

Also, let  
$$
\eta_t(l)=\frac{1}{c^*(l)}\int J^*_t(u)Q_{12}(u,t|l)S_0(u|l)\alpha_{01}(u|l)\, du,
$$
and $\eta_t=E\{\eta_t(L)\}.$}
For the general case, we have the following result.
\begin{thm}
   \label{eif.cov} 
The observed data efficient influence function corresponding to $\psi_t$ is 
\begin{align*}
D^*_{\psi_t}(O,P)=\psi_t(L)-\psi_t+D^*_{\psi_t}(O,P|L),
\end{align*}
where $D^*_{\psi_t}(O,P|L)$ means the expression given in Theorem \ref{thm:eif}, but with all distributional quantities replaced by the corresponding ones where we condition on $L$, so that for instance $dM_0^{o}(u)$ is replaced by $dM_0^{o}(u|L)$ and similarly for all other distributional quantities.
\end{thm}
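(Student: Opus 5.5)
We view $\psi_t$ as the functional $P\mapsto \int \psi_t(l;P_{\cdot\mid L})\,dP_L(l)$. Here $\psi_t(l;P_{\cdot\mid L})$ is the covariate-free interventional risk of Section~\ref{Sect:Treat at time point}, evaluated under the conditional law of the multi-state data and censoring given $L=l$. We compute the EIF by the standard chain-rule (Gateaux derivative) argument on the factorisation $p(O)=p_L(L)\,p(O\mid L)$.

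Consider a regular one-dimensional submodel $P_\epsilon$ with score $S(O)=S_L(L)+S_{O\mid L}(O\mid L)$, where $E\{S_{O\mid L}\mid L\}=0$. The nonparametric tangent space decomposes orthogonally into mean-zero functions of $L$ and conditional-mean-zero functions given $L$. Differentiating under the integral gives
$$
\frac{d}{d\epsilon}\psi_t(P_\epsilon)\Big|_{\epsilon=0}=\int \psi_t(l)\,S_L(l)\,dP_L(l)+\int \frac{d}{d\epsilon}\psi_t(l;P_{\epsilon,\cdot\mid L=l})\Big|_{\epsilon=0}\,dP_L(l).
$$
The first term equals $E[\{\psi_t(L)-\psi_t\}S(O)]$, since $\psi_t(L)-\psi_t$ has mean zero and is orthogonal to $S_{O\mid L}$.

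For the second term, fix $l$ and apply Theorem~\ref{thm:eif} to the conditional law given $L=l$. The CAR assumption holds conditionally on $L$, and positivity ($S_0(u\mid l)K(u\mid l)>0$ on the relevant range) is assumed for each $l$. Theorem~\ref{thm:eif} therefore gives the pathwise derivative in the form
$$
\frac{d}{d\epsilon}\psi_t(l;P_{\epsilon,\cdot\mid l})\Big|_{\epsilon=0}=E\{D^*_{\psi_t}(O,P\mid L)\,S_{O\mid L}(O\mid L)\mid L=l\}.
$$
The same holds for each piece $\psi_t^-$, $\mu_t$, $\eta_t$, with the derivative of $c^*(l)$ entering through $D^*_{c^*}(O,P\mid L)$. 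Integrating over $l$ yields $E\{D^*_{\psi_t}(O,P\mid L)S_{O\mid L}\}$. Moreover, $D^*_{\psi_t}(O,P\mid L)$ is a sum of stochastic integrals against the conditional martingales $M^o_{01}(\cdot\mid L)$, $M^o_{02}(\cdot\mid L)$, $M^o_{12}(\cdot\mid U,L)$ with $L$-measurable (predictable) integrands. Hence $E\{D^*_{\psi_t}(O,P\mid L)\mid L\}=0$, so it is orthogonal to $S_L$. Adding the two contributions, $\psi_t(L)-\psi_t+D^*_{\psi_t}(O,P\mid L)$ represents the derivative for every score. Since the model is nonparametric apart from CAR, whose tangent space is handled exactly as in Theorem~\ref{thm:eif}, this gradient is the efficient one.

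The main obstacle is justifying that the conditional version of Theorem~\ref{thm:eif} applies uniformly in $l$ and that differentiation may be interchanged with integration over $l$. I would handle this by imposing bounded-away-from-zero conditions on $c^*(l)$, $S_0(\cdot\mid l)$ and $K(\cdot\mid l)$ and bounded scores, then appealing to dominated convergence. A second point to check is that the terms $-\eta_t(L)D^*_{c^*}(O,P\mid L)/c^*(L)$ and $-\mu_t(L)D^*_{c^*}(O,P\mid L)/c^*(L)$ are indeed the conditional analogues. This holds because the normalisation by $c^*(l)$ occurs inside the $l$-integral, so the ratio derivative is computed pointwise in $l$ exactly as in the proof of Theorem~\ref{thm:eif}. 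The piecewise form in $t$ carries over unchanged, because $u_\pm^*$ do not depend on $l$.
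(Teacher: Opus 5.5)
Your proposal is correct and is essentially the argument the paper relies on. The paper gives no separate proof of this theorem: it treats it as an immediate consequence of Theorem~\ref{thm:eif} applied to the conditional law given $L$, combined with the standard marginalization rule (the same Kennedy (2022) calculus used for Theorem~\ref{thm:eif}) that the EIF of $\int \psi_t(l)\,dP_L(l)$ is $\psi_t(L)-\psi_t$ plus the conditional EIF. Your score factorization $S=S_L+S_{O\mid L}$ is the rigorous version of that rule, and so is your observation that $c^*(l)$, and hence the $D^*_{c^*}$ correction terms, are handled pointwise in $l$.
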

Thus, in the general case, the one-step estimator is  given by 
$$
\hat\psi_t^{os}=\mathbb{P}_n\left\{ \psi_{tn}(L)+D^*_{\psi_t}(O,P_n|L)\right \}.
$$

\begin{thm}\label{thm:remainder_w_L}
Consider $n$ i.i.d. replicates of
$O=\{T\wedge C,U\wedge T \wedge C,
\delta_1=I(U\leq T\wedge C),\delta_2=I(T\leq C),L\}$.
Then $\hat\psi_t^{os}$ is asymptotically normally distributed with mean $\psi_t$
and variance consistently estimated by the sample variance of
$D^*_{\psi_t}(O,P_n)$, provided that the nuisance estimators are trained on a
separate part of the data and that the second-order remainder is
$o_p(n^{-1/2})$.

More explicitly, the remainder is a finite sum of weighted integrals whose
integrands contain products of nuisance estimation errors of the following types:
\begin{align*}
&\{S_1^n(t|u,L)-S_1(t|u,L)\}\{S_{02}^n(u|L)-S_{02}(u|L)\},\\
&\{S_1^n(t|u,L)-S_1(t|u,L)\}
  \{S_{01}^n(u|L)dA_{01}^n(u|L)-S_{01}(u|L)dA_{01}(u|L)\},\\
&\{S_{02}^n(u|L)-S_{02}(u|L)\}
  \{S_{01}^n(u|L)dA_{01}^n(u|L)-S_{01}(u|L)dA_{01}(u|L)\},\\
&\{S_{02}^n(u|L)-S_{02}(u|L)\}\{dA_{01}^n(u|L)-dA_{01}(u|L)\},\\
&\{S_{01}^n(u|L)-S_{01}(u|L)\}\{dA_{02}^n(u|L)-dA_{02}(u|L)\},\\
&\frac{K_n(r|u,L)-K(r|u,L)}{K_n(r|u,L)}
  d\{A_{12}^n(r|u,L)-A_{12}(r|u,L)\},\\
&\frac{K_n(u|L)-K(u|L)}{K_n(u|L)}
  d\{A_{0j}^n(u|L)-A_{0j}(u|L)\},\quad j=1,2.
\end{align*}
Here all products are integrated against bounded weights over the relevant
time regions. 
Sufficient conditions are that each of these weighted integrals is
$o_p(n^{-1/2})$.
\end{thm}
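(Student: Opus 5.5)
We follow the standard one-step route: write the estimation error as an empirical-process term plus a second-order remainder, and then analyse the remainder term by term. For $P_n$ denoting the nuisance estimates trained on an independent split, the von Mises expansion gives
$$
\hat\psi_t^{os}-\psi_t=(\mathbb{P}_n-P)D^*_{\psi_t}(O,P)+(\mathbb{P}_n-P)\{D^*_{\psi_t}(O,P_n)-D^*_{\psi_t}(O,P)\}+R_2(P_n,P),
$$
where $R_2(P_n,P)=\psi_t(P_n)-\psi_t(P)+P D^*_{\psi_t}(O,P_n)$. The first term is a centred i.i.d. average and is asymptotically normal by the CLT, with variance $P\{D^{*2}_{\psi_t}\}$. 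For the second term we condition on the training sample, so $D^*_{\psi_t}(\cdot,P_n)$ is a fixed function. Chebyshev's inequality then shows it is $o_p(n^{-1/2})$ once $\|D^*_{\psi_t}(\cdot,P_n)-D^*_{\psi_t}(\cdot,P)\|_{L_2(P)}\to 0$, which follows from consistency of the nuisance estimators together with the positivity bounds $S_0,K,c^*(L)\ge\epsilon$. The same consistency, combined with the law of large numbers, gives consistency of the sample variance of $D^*_{\psi_t}(O,P_n)$. With cross-fitting we average over folds in the usual way.

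The substance is the remainder $R_2$. Because $D^*_{\psi_t}(O,P)=\psi_t(L)-\psi_t+D^*_{\psi_t}(O,P|L)$ by Theorem \ref{eif.cov}, the $L$-marginal part cancels exactly: $\mathbb{P}_n\psi_{tn}(L)$ uses the empirical distribution of $L$ and introduces no bias. Hence
$$
R_2=E\bigl[\psi_t(L;P_n)-\psi_t(L;P)+E\{D^*_{\psi_t}(O,P_n|L)\mid L\}\bigr],
$$
and it suffices to carry out the covariate-free calculation of Theorem \ref{thm:remainder} conditionally on $L$. For each martingale term we use CAR to compute conditional expectations under $P$:
$$
E\Bigl\{\int h(u)\,\frac{dM^{o}_{0j,n}(u|L)}{S^n_0K_n}\,\Big|\,L\Bigr\}=\int h(u)\,\frac{S_0K}{S_0^nK_n}\,d(A_{0j}-A^n_{0j})(u|L),
$$
and similarly for the $M_{12}$ term with $K(r|u,L)$ and $S_1$. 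We then write $\frac{K}{K_n}=1-\frac{K_n-K}{K_n}$. The part $\frac{K_n-K}{K_n}\,d(A-A^n)$ produces exactly the censoring-times-hazard products listed in the theorem.

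The leading parts without censoring weights must be matched against the plug-in difference $\psi_t(L;P_n)-\psi_t(L;P)$. We handle this component by component. For $\psi^-_t$ we use the Duhamel equation $S_{02}^n-S_{02}=-S_{02}^n\int S_{02}/S_{02}^n\,d(A^n_{02}-A_{02})$, leaving products of differences involving $S_{01}$ and $S_{02}$. For $\mu_t$ and $\eta_t$ the functionals are trilinear in $(S_1,S_{02},S_{01}dA_{01})$. We expand $abc-a_nb_nc_n$ by telescoping, and the linear terms cancel against the $h_\mu$, $h_\eta$ and $M_{12}$ corrections by construction of the EIF. The ratio normalisation by $c^*(L)$ is handled by the expansion
$$
\frac{x_n}{c_n}-\frac{x}{c}=\frac{x_n-x}{c}-\frac{x(c_n-c)}{c^2}+O\{(x_n-x)(c_n-c)\}+O\{(c_n-c)^2\},
$$
whose linear parts match $-\eta_tD^*_{c^*}/c^*$ and $-\mu_tD^*_{c^*}/c^*$. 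The quadratic pieces are again of the listed types, since $c^*$ is a functional of $S_{01}dA_{01}$.

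The main obstacle is bookkeeping this exact cancellation in the $\eta_t$ term. That term involves the double integral $\int\int_0^u dM_0^o/(S_0K)\,\{1-S_1\}S_0\,dA_{01}$, and the integration variable $u$ sits inside the hazard $A_{12}(\cdot|u,L)$. We would first verify the identity in the no-censoring, no-covariate case by Fubini. After that, the general remainder reduces to the displayed product forms, each integrated against weights bounded by positivity. Requiring each of these integrals to be $o_p(n^{-1/2})$ completes the argument via Slutsky's lemma.
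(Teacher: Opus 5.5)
Your route is the paper's: form $R_2$, cancel the $L$-marginal part, take expectations of the estimated martingale increments, and split $K/K_n=1-(K_n-K)/K_n$. Your handling of the empirical-process term and of the $L$-marginal is correct. The step that fails is your claim that the quadratic pieces from the normalisation by $c^*(L)$ ``are again of the listed types, since $c^*$ is a functional of $S_{01}dA_{01}$''.

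For $t\ge u_-^*$, write $\phi=\mu_t(L)+\eta_t(L)=x/c$ with $c=c^*(L)$. The conditional EIF of $\phi$ is then $(D_x-\phi D_c)/c$, and at $P_n$ it carries $1/c_n$ and $\phi_n$. Whatever the remainders $R_x,R_c$ of the numerator and of $c$ are, one has exactly
\[
\phi_n-\phi+E\{D_\phi(O,P_n)\mid L\}=\frac{(c_n-c)(\phi_n-\phi)}{c_n}+\frac{R_x-\phi_n R_c}{c_n}.
\]
Here $c_n-c$ is pure $01$-error, since $c^*(L)=S_{01}(u_-^*|L)-S_{01}(u_+^*|L)$. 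Meanwhile $\phi_n-\phi$ contains a first-order $01$-error through the shape of the normalised window density $\tilde g$. So the first term contains the $01$-error multiplied by itself, whereas every displayed type pairs two \emph{distinct} components.

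Concretely, suppose every nuisance except $A_{01}(\cdot|L)$ is estimated without error. Then every displayed product vanishes. Also $R_x=R_c=0$: the numerators and $c$ are linear in $S_{01}$, and by Duhamel the $M_{01}$-corrections reproduce $S_{01}-S_{01}^n$ exactly. Yet $R_2=E[\{c_n^*(L)-c^*(L)\}\{\phi_n(L)-\phi(L)\}/c_n^*(L)]$. For a perturbation of $A_{01}(\cdot|L)$ inside $(u_-^*,u_+^*)$ this is generically nonzero and of exact order $\|A_{01}^n-A_{01}\|^2$. So your bookkeeping cannot close as claimed, and the displayed conditions are not by themselves sufficient.

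The paper's one-line proof (``follows directly from the proof of Theorem~\ref{thm:remainder}'') passes over the same term, and the paper does say its display is not exhaustive. Without covariates the term is harmless, since Nelson--Aalen gives $c_n^*-c^*=O_p(n^{-1/2})$. With a flexible estimator of $A_{01}(\cdot|L)$, however, it matters.

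To repair your argument, keep $E[\{c_n^*(L)-c^*(L)\}\{\phi_n(L)-\phi(L)\}/c_n^*(L)]$ explicitly in the remainder. This amounts to a same-component product of the type $\{S_{01}^n-S_{01}\}\{S_{01}^ndA_{01}^n-S_{01}dA_{01}\}$, which should be added to the list. Then require it to be $o_p(n^{-1/2})$, for instance via an $o_p(n^{-1/4})$ rate for $A_{01}^n(\cdot|L)$ on $[0,u_+^*]$ together with $c^*(L)$ bounded away from zero. With that addition the rest of your argument goes through.
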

The display above emphasizes the types of nuisance estimation error products
appearing in the second-order remainder rather than an algebraically exhaustive
decomposition. The key point is that all remainder contributions are of second
order and involve products of estimation errors from distinct nuisance
components.
This directly reveals the robustness structure of the estimator. In particular,
slow convergence of one nuisance component may be compensated by sufficiently
fast convergence of the component with which it is paired.
In principle, this allows for flexible, data-adaptive estimation of the nuisance parameters, including machine learning approaches. In practice, however, some components - most notably the post-treatment transition and the associated conditional survival function $S_1(t|u,L)$ - may be more challenging to estimate accurately due to their dependence on both treatment initiation time and covariates as well as the built-in delayed-entry structure. This motivates the use of a combination of parametric working models and data-adaptive estimators in finite-sample implementations.

\nothere{

\begin{thm}\label{thm:remainder_w_L}
Consider $n$ i.i.d. replicates of $O=\{T\wedge C,U\wedge T \wedge C, \delta_1=I(U\leq T\wedge C),\delta_2=I( T\leq C),L\}$. Then $\hat\psi_t^{os}$ is asymptotically normally distributed with mean $\psi_t$
and a variance that can be consistently estimated as the sample variance of $D^*_{\psi_t}(O,P_n)$,
provided that the nuisance parameter estimators are obtained on a separate part of the data, and that all of the following terms are $o_p(n^{-1/2})$:
\begin{align*}
&\{S_1^n(t|u,l)-S_1(t|u,l)\}\{S_{02}^n(u|l)-S_{02}(u|l)\}\\
&\{S_1^n(t|u,l)-S_1(t|u,l)\}\{S_{01}^n(u|l)dA_{01}^n(u|l)-S_{01}(u|l)dA_{01}(u|l)\}\\
&\{S_{02}^n(u|l)-S_{02}(u|l)\}\{S_{01}^n(u|l)dA_{01}^n(u|l)-S_{01}(u|l)dA_{01}(u|l)\}\\
&\{S_{02}^n(u|l)-S_{02}(u|l)\}\{dA_{01}^n(u|l)-dA_{01}(u|l)\}\\
&\{S_{01}^n(u|l)-S_{01}(u|l)\}\{dA_{02}^n(u|l)-dA_{02}(u|l)\}\\
&\frac{K_n(r|u,l)-K(r|u,l)}{K_n(r|u,l)}d\{ A_{12}^n(r|u,l)- A_{12}(r|u,l) \}\\
&\frac{K_n(u|l)-K(u|l)}{K_n(u|l)}d\{ A_{0j}^n(u|l)- A_{0j}(u|l) \},\; j=1,2.
\end{align*}
The specific expression of the latter term is 
\begin{align*}
E\int J_t^*(u)&\left\{\int_u^t\frac{K_n(r|u,L)-K(r|u,L)}{K_n(r|u,L)}\frac{S_1(r|u,L)}{S^n_1(r|u,L)}d\{ A_{12}^n(r|u,L)- A_{12}(r|u,L) \}\right\}\\
&\times S_1^n(t|u,L)S_0^n(u|L)dA^n_{01}(u|L)
\end{align*}

\end{thm}

The structure of the remainder terms highlights that each component involves a product of estimation errors for different nuisance parameters. This implies forms of rate double robustness between the models for the transition hazards and the censoring distribution. In particular, sufficiently fast convergence of some nuisance components may compensate for slower convergence of others.
}

\nothere{
\begin{thm}\label{thm:remainder_w_L_old}
Consider $n$ i.i.d. replicates of $O=\{T\wedge C,U\wedge T \wedge C, \delta_1=I(U\leq T\wedge C),\delta_2=I( T\leq C),L\}$
and suppose that CAR (specify). Then $\hat\psi_t^{os}$ is asymptotically normally distributed with mean $\psi_t$
and a variance that can be consistently estimated as the sample variance of $D^*_{\psi_t}(O,P_n)$,
provided that the nuisance parameter estimators 
$A_{01}^n(u|L)$, $A_{02}^n(u|L)$,
$A^n_{12}(r|u,L)$  and $K_n(r|u,L)$ are trained on a separate part of the data, and that   all of the following terms are $o_p(n^{-1/2})$:
\begin{align*}
       &E\left [\int J^*_t(u)v_1(u;L)\{\Lambda_C^n-\Lambda_C\}(u|L)
d\{A_{01,n}^{o}-A_{01}^{o}\}(u|L)\right ],\\
&E\left [\int J^*_t(u)v_2(u;L)\{\Lambda_C^n-\Lambda_C\}(u|L)
d\{A_{02,n}^{o}-A_{02}^{o}\}(u|L)\right ],\\
&E\left [\int J^*_t(u)\{S_1^n-S_1\}(t|u,L)\{\Lambda_C^n-\Lambda_C\}(t|u,L)
K^{-1}(t|u,L)S_0(u|L)dA_{01}^{o}(u|L)\right ].
\end{align*}
for $v_j(u;l)$, $j=1,2$, some well behaved functions.
\end{thm}

The robustness properties of the estimator  $\hat\psi_t^{os}$ are seen in the latter display
because of the product structure of each of the three terms in this display.
They suggest forms of rate double robustness between all models for the transition hazards and the censoring distribution. Hence, if machine learning is used for all these models so that convergence is not too slow, we will have the desired result regarding $\hat\psi_t^{os}$ as stated in the theorem.
}
{\bf Remark}

\vspace{-0.8cm}

\begin{itemize}
     \item[] We may look for simple interaction effects as follows.
    If we split the covariates into $(W,L)$ with $W$ categorical (binary, say),  we may then consider
 $$
 \psi_t(w)=\int \widetilde P^{o}(T\le t\mid w,L=l)\,f_{L|W}(l|w)\,dl.
 $$
Next step is to compare $ \psi_t(1)$ and $ \psi_t(0)$ under various treatment regimes. With $q(w)=P(W=w)$, we get the plug-in estimator
 $$
 \hat \psi^{pi}_t(w)=\frac{1}{nq(w)}\sum_iI(W_i=w)\tilde P_n(T\le t\mid w,L_i)
 $$
 and the one-step estimator is 
 $$
\hat \psi^{os}_t(w)=\hat \psi^{pi}_t(w)+\frac{1}{nq(w)}\sum_iI(W_i=w)D_{\psi_t}^*(O,P_n\mid w,L_i).
 $$
 As an illustration, we apply this in Section \ref{STH} when re-analyzing the Stanford Heart Transplant data.

 \end{itemize}

 \vspace{-2cm}
 
\section{Numerical studies}

\subsection{Simulation study}
We investigated the performance of the proposed procedure in simulations using a setup
where $L$ was two-dimensional with $L_1$ and $L_2$ independent, $L_1$ standard normal and $L_2$ binary with $P(L_2=1)=0.5$. We took 
$\alpha_{01}(t|L)=0.4\exp{( 0.25L_1 )}$, $\alpha_{02}(t|L)=0.4\exp{( -0.25L_1 )}$ and 
$\alpha_{12}(t|u,L)=0.4\exp{(-0.2u- 0.25L_1 )}$. Censoring was generated using $ \alpha_C(t|L)=0.2\exp{(0.1 L_1 )}$.
All working models included both covariates $L_1$ and $L_2$, although the data-generating process does not depend on $L_2$.
We took $b=0.3$, $u^*=0.8$, $t=2$ and used $n=300, 600.$  
As noted in the previous section, we adopt a pragmatic approach to the estimation of the nuisance components. 
The purpose of the simulations is primarily to assess the robustness properties of the proposed one-step correction under nuisance-model misspecification rather than to demonstrate a fully machine-learning-based implementation.
We first use working Cox models for all transition hazards, reflecting common practice in applied survival analysis and allowing us to isolate the theoretical properties of the proposed estimator. In particular, we consider both correctly specified working models and a scenario  in which the post-treatment transition model is misspecified.
Under correct specification, we compute both the plug-in estimator $\hat\psi_t^{pi}$ and the one-step estimator $\hat\psi_t^{os}$. Results are reported in Table \ref{Tab1}, where both estimators are seen to be unbiased, and the variability of the one-step estimator is well captured by the corresponding estimated standard error.
We then modify the DGP by changing only the post-treatment transition to 
$\alpha_{12}(t|u)=0.4\exp{\{-0.4(u-1.22)^2- 0.15L_1 \}}$, while keeping the working models unchanged. 
The working model for the 12-transition is therefore misspecified. The corresponding results are  reported in Table \ref{Tab1}, showing that the plug-in estimator is now biased, whereas the one-step estimator remains approximately unbiased, in agreement with the theoretical results.

To further illustrate that the proposed framework is compatible with flexible nuisance estimation, we additionally consider data-adaptive estimation of selected nuisance components using machine learning methods. Specifically, we apply machine learning to the estimation of the 01- and 02-transition hazards and the censoring distribution, for which standard survival learning tools \citep{westling2024inference} are readily available, combined with 3-fold cross-fitting. Results are marked with ML in Table \ref{Tab1}. The resulting estimator continues to exhibit the expected bias correction, supporting the theoretical robustness properties of the proposed method. A slight bias is observed when $n=300$, but this diminishes as the sample size increases.

\nothere{
We investigated the performance of the proposed procedure in simulations using a setup
where $L$ was two dimensional with $L_1$ and $L_2$ independent and $L_1$ standard normal and $L_2$ binary with $P(L_2=1)=0.5$. We took 
$\alpha_{01}(t|L)=0.4\exp{( 0.25L_1 +0\cdot L_2)}$, $\alpha_{02}(t|L)=0.4\exp{( -0.25L_1 +0\cdot L_2)}$ and 
$\alpha_{12}(t|u,L)=0.4\exp{(-0.2u- 0.25L_1 +0\cdot L_2)}$. Censoring was generated using $ \alpha_C(t|L)=0.2\exp{(0.1 L_1 +0\cdot L_2)}$.
We took $b=0.3$, $u^*=0.8$, $t=2$ and used $n=300, 600.$  
As noted in the previous section, we adopt a pragmatic approach concerning estimation of the needed working models. We first use working Cox models for all transition hazards, reflecting common practice in applied survival analysis and allowing us to isolate the theoretical properties of the proposed estimator.
In particular, we consider both correctly specified working models and scenarios where the model for the post-treatment transition is misspecified.
We first used working Cox-models that were correctly specified. We calculated both the plug-in estimator $\hat\psi_t^{pi}$ as well as the one-step estimator $\hat\psi_t^{os}$.
Results are reported in Table \ref{Tab1} from which is seen that both estimators  are unbiased and that variability of the one-step estimator is well captured by the corresponding estimated standard error.
We then changed the DGP keeping all the hazard functions the same except that $\alpha_{12}(t|u)$ was changed to
$\alpha_{12}(t|u)=0.4\exp{\{-0.4(u-1.22)^2- 0.15L_1 +0\cdot L_2\}}$ but the working models were unchanged. Hence, the working model for the 12-transition is now misspecified. Results also reported in Table \ref{Tab1}. It is seen that the plug-in estimator is now biased, while the one-step estimator is unbiased according to theory.
To further illustrate the flexibility of the approach, we also consider data-adaptive estimation of selected nuisance components using machine learning methods. Specifically, we apply machine learning for  estimation of the 01- and 02-transition hazards and the censoring distribution, for which standard survival learning tools \citep{westling2024inference} are readily available and using  3-fold cross-fitting. Results are marked with ML in Table \ref{Tab1}
The resulting estimator continues to exhibit the expected bias correction, supporting the theoretical robustness properties of the proposed method. A slight bias is seen when $n=300$ but it diminishes when doubling the sample size. 
%
}

\begin{table}[ht]
\centering
\begin{tabular}{llcccccc}
\toprule
Working model & $n$ & True value & Mean ($\hat\psi_t^{pi}$) 
& Mean ($\hat\psi_t^{os}$) & SD & ESE & Cov95 (\%) \\
\midrule
Correct  
& 300 & 0.521 & 0.519 & 0.521 & 0.067 & 0.067 & 94.1 \\
Correct
& 600 & 0.521 & 0.519 & 0.518 & 0.049 & 0.047 & 94.7 \\
Misspecified
& 300 & 0.533 & 0.501 & 0.530 & 0.067 & 0.068 & 94.4 \\
Misspecified
& 600 & 0.533 & 0.501 & 0.530 & 0.049 & 0.048 & 94.6 \\
Misspecified, ML
& 300 & 0.533 & 0.503 & 0.543 & 0.075 & 0.072 & 94.9 \\
Misspecified, ML
& 600 & 0.533 & 0.505 & 0.537 & 0.049 & 0.049 & 95.2 \\
\bottomrule
\end{tabular}
\caption{\vspace{0.1em} Results for the one-step estimator $\hat\psi_t^{os}$ and plug-in estimator $\hat\psi_t^{pi}$. 
The first two rows correspond to data generated under the model used for the working models, whereas the remaining rows corresponds to data generated under a different model while keeping the same working models. Mean denotes the empirical mean across simulations, SD is the empirical standard deviation of the one-step estimator, ESE is the average estimated standard error for the one-step estimator, and Cov95 is the empirical coverage probability (in percent) of nominal 95\% confidence intervals based on the one-step estimator.}
\label{Tab1}
\end{table}

\subsection{Application to the Stanford Heart Transplant data}
\label{STH}
The data consist of patients with end-stage heart disease who were placed on a heart transplant waiting list. For each patient, follow-up begins at entry onto the waiting list and continues until death or censoring. During follow-up, some patients receive a heart transplant, while others die before transplantation or are censored.
These data have previously been analyzed using heart transplantation as a time-dependent treatment to correct earlier analyses where heart transplantation was taken as a baseline variable resulting in immortal-time bias. However, the interpretation of Cox-regression with heart transplantation as a time-dependent treatment is causally subtle as already mentioned.
Instead we apply the proposed methodology focussing on two treatment strategies: Treat from start and never treat. 
Although immediate transplantation at study entry is not practically feasible, the intervention may still be interpreted as a hypothetical strategy shifting transplantation as early as possible within the considered time window.
We took $b=30$ days meaning that the heart transplant should take place within the first 30 days after inclusion to the waiting list. It is seen from Figure \ref{fig:Data_STH0} in the Appendix that there is no significant difference between the two treatment strategies.
However, in \cite{kalbfleisch2002statistical}, it is suggested  that transplantation might interact with year of acceptance into the program as it is suggested that the overall quality of patients being admitted to the study  was improving with time. As an illustration, we investigate this using an indicator of whether a given patient is admitted 
after 1970-12-17 (half way through the period of admissions), which we  refer to as early/late. Figure \ref{fig:Data_STH1} shows the estimated risk of dying for the patients under the two treatment strategies: transplant from start or never transplant, and stratified by early/late admission into the program. The estimates indicate that the strategy of treating from the start versus never treat might be most beneficial for those who are admitted early into the program. However, as seen from Figure \ref{fig:Data_STH2}, this is not significant.

\begin{figure}
\centering
\includegraphics[width=10cm,height=8cm]{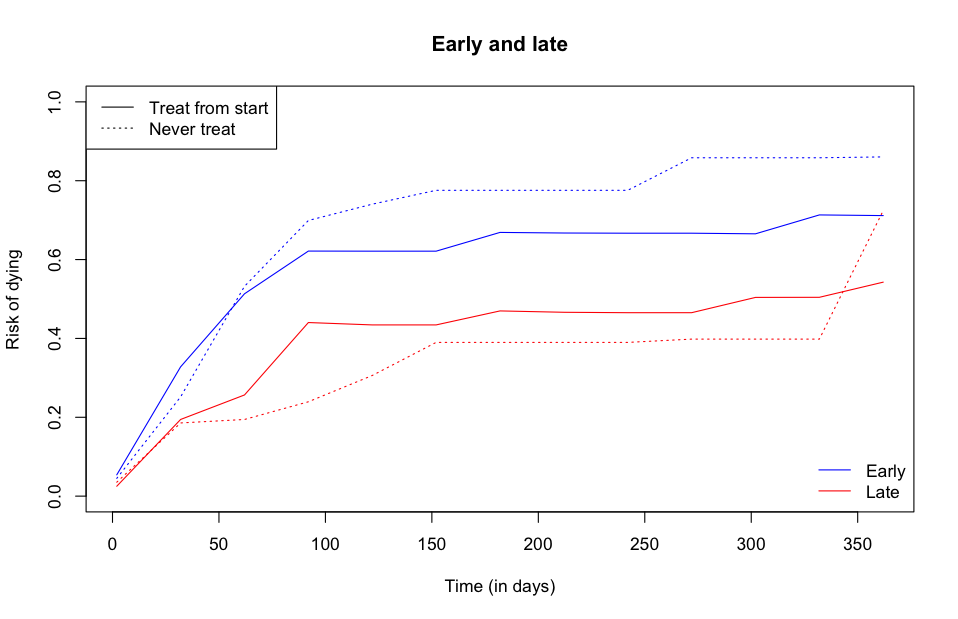}
\caption{Stanford Heart Transplant Data. Risk of dying as function of time under the two treatment strategies: "Transplant from start" and "Never Transplant" and stratified by early/late admission to the program.}
\label{fig:Data_STH1}
\end{figure}

\begin{figure}
\centering
\includegraphics[width=12cm,height=8cm]{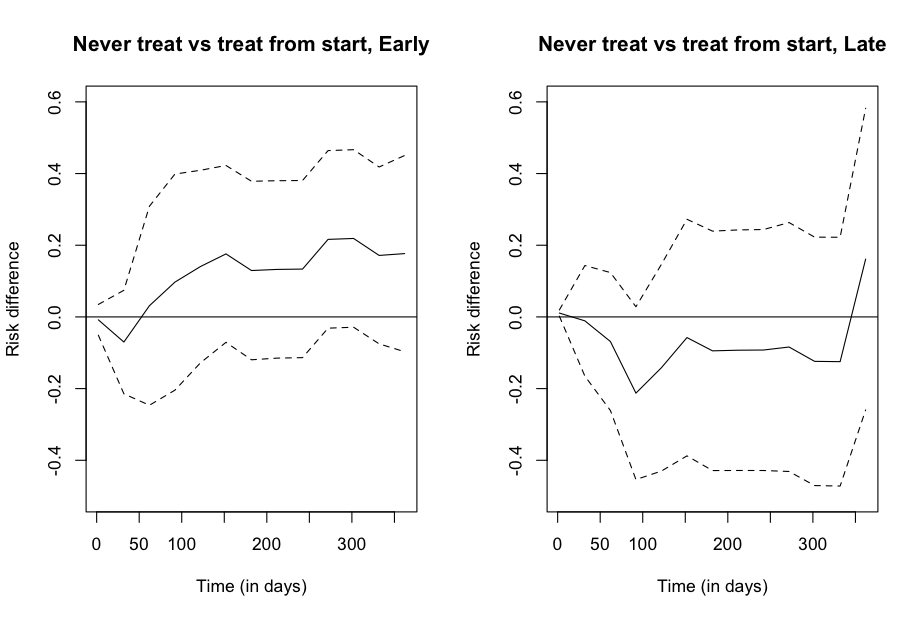}
\caption{Stanford Heart Transplant Data. Contrasts for the two treatment strategies and stratified by early/late admission to the program.}
\label{fig:Data_STH2}
\end{figure}


\subsection{Application to data on the effect of the timing of intrauterine insemination for couples with unexplained
subfertility
}

We also applied our method to data presented in \cite{prosepe2025causal} on time to 
achieving pregnancy  within 1.5 years after completion of the diagnostic workup under different strategies for delaying treatment initiation. The population consisted of couples with unexplained subfertility, defined as having attempted natural conception for more than one year without success despite normal findings on standard fertility assessments.
The question of whether to postpone treatment is clinically relevant, as assisted reproductive interventions may increase the likelihood of pregnancy, but are also associated with potential drawbacks. In particular, intrauterine insemination (IUI) may involve hormonal stimulation with possible side effects, as well as financial costs and psychological burden for the couple.
 More details of the study can be found in \cite{prosepe2025causal} and references therein. We use the same data consisting of 1896 couples, each followed for up to four years. Among these, 569 couples conceived without treatment, while 863 initiated IUI. Of those receiving IUI, 163 subsequently achieved pregnancy. As an illustration we consider three treatment strategies: Treat from start; Treat after 6 months and Never treat.
The analysis incorporated baseline characteristics collected during the diagnostic evaluation that could plausibly affect both the probability of future pregnancy and the timing of treatment initiation. The covariates considered were female age (centered), duration of subfertility,  gynecologist referral (yes/no), and the proportion of progressively motile sperm (centered). We took $b$ as two months. Figure \ref{fig:Data_Prosepe}, top row,
displays the probability of pregnancy under the three treatment strategies: never treat; treat from start and treat after 6 months. The figure also shows 95\% confidence intervals. The bottom row gives the three possible contrasts as indicated along with 95\% confidence intervals. It is seen that treatment from start is superior to never treat while no significant difference is seen between the strategies treat from start and treat after 6 months. Also, strategy treat after 6 months is superior to never treat.

\begin{figure}
\centering
\includegraphics[width=14.5cm,height=10cm]{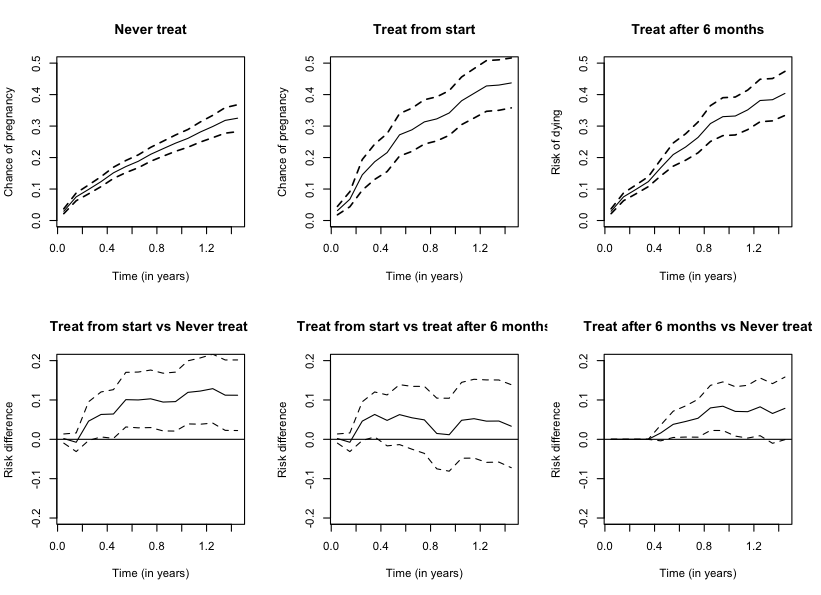}
\caption{IUI Data. Top panel from left to right: Chance of getting pregnant under the three different treatment strategies along with 95\% confidence intervals.
Bottom panel from left to right: The three possible contrasts as indicated along with 95\% confidence intervals.}
\label{fig:Data_Prosepe}
\end{figure}

\section{Concluding remarks}

In this paper, we have used the illness-death framework to deal with the situation where treatment status can change once over the considered time period. More complex treatment dynamics  may be formulated using more general 
multistate frameworks and a similar, but obviously more involved, method can be developed.
Similarly, we can also deal with a competing risk situation where the endpoint could for instance be time to relapse with death being a competing event. A similar development can be phrased but will again require a more involved multistate framework.

One can shift the focus from the  estimand  $\tilde P^{o}(T\leq t)$ to another summary measure such as
$$
\int_0^{\tau}\tilde P^{o}(T> t)dt=\tilde E^{o}(T\wedge\tau)
$$
corresponding to the restricted mean survival time under the considered intervention. Since this latter estimand is a simple function of the estimand considered in the paper, the corresponding theoretical development follows readily.

Although the proposed framework is compatible with flexible nuisance estimation in principle, fully data-adaptive estimation of the post-treatment transition component remains challenging in the present delayed-entry setting. The development of practically stable machine-learning procedures for this component constitutes an interesting topic for future work.

\bibliographystyle{biometrika}
\bibliography{references}

\clearpage

\pagenumbering{arabic}  
\pagestyle{headings}
$\mbox{}$

\section*{Appendix}

We have the following lemma:

\begin{lem}
\label{basic_eif}
The observed data efficient influence functions corresponding to $A_{01}(t)$,  $A_{02}(t)$ and $A_{12}(t|u)$ are, respectively,
\begin{align*}
    D_{A_{01}(t)}^*(O,P)&=\int_0^t\frac{dM_{01}^{o}(u)}{S_0(u)K(u)},\quad
   D_{A_{02}(t)}^*(O,P) =\int_0^t\frac{dM_{02}^{o}(u)}{S_0(u)K(u)},\\
    D_{A_{12}(t|u)}^*(O,P)&=\frac{I(U\wedge T \wedge C=u)\delta_1}{P(U\wedge T \wedge C=u,\delta_1=1)}\int_u^t\frac{dM_{12}^{o}(r|u)}{S_1(r|u)K(r|u)}.\\
\end{align*}
Also,
 $
 P(U\wedge T \wedge C=u,\delta_1=1)=S_0(u)K(u)\alpha_{01}(u).
 $
\end{lem}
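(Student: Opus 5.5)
The statement to prove is Lemma \ref{basic_eif}, the efficient influence functions of the basic building blocks $A_{01}(t)$, $A_{02}(t)$ and $A_{12}(t|u)$. The plan is a direct pathwise-derivative computation in the observed-data model under CAR. First we verify the stated identity $P(U\wedge T\wedge C=u,\delta_1=1)=S_0(u)K(u)\alpha_{01}(u)$, which is the subdensity of an observed $0\to1$ transition at $u$. This follows because, under CAR, the event requires survival in state 0 up to $u$ (probability $S_0(u)$), being uncensored up to $u$ (probability $K(u)$), and then a $0\to1$ jump with hazard $\alpha_{01}(u)$.

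For $A_{0j}(t)$, $j=1,2$, write
$$A_{0j}(t)=\int_0^t \frac{dF_{0j}(u)}{P(U\wedge T\wedge C\ge u)},$$
where $F_{0j}(u)=P(U\wedge T\wedge C\le u,\text{exit type }j\text{ observed})$ is the observed subdistribution and the denominator equals $S_0(u-)K(u-)$ under CAR. Perturb $P$ along a one-dimensional submodel $P_\epsilon$ with score $s(O)$ and differentiate. The derivative of the numerator gives $E\{s(O)\,dN_{0j}^o(u)\}$, and the derivative of the denominator gives $-E\{s(O)Y_0^o(u)\}\,dA_{0j}(u)$. Dividing by $S_0K$ and integrating yields
$$\frac{d}{d\epsilon}A_{0j,\epsilon}(t)=E\Big\{s(O)\int_0^t \frac{dM_{0j}^o(u)}{S_0(u)K(u)}\Big\}.$$
Because the model is nonparametric (only CAR is imposed), the tangent space is saturated. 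The gradient is therefore unique and is the EIF. It has mean zero since it is a martingale integral.

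For $A_{12}(t|u)$ the argument is the same, carried out conditionally on the transition time. We have
$$A_{12}(t|u)=\int_u^t \frac{E\{dN_{12}^o(r)\mid U\wedge T\wedge C=u,\delta_1=1\}}{E\{Y_1^o(r)\mid U\wedge T\wedge C=u,\delta_1=1\}}.$$
The conditional at-risk probability is $S_1(r|u)K(r|u)$, with $K(r|u)$ now the conditional censoring survival given treatment initiated at $u$. The pathwise derivative of a conditional expectation $E\{f\mid X=u\}$ contributes the factor
$$\frac{I(X=u)}{P(X=u)}\,\bigl(f-E\{f\mid X=u\}\bigr),$$
with $X$ the observed $0\to1$ event. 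Applying this to the numerator and denominator gives the displayed expression $\frac{I(\cdot=u)\delta_1}{P(\cdot=u,\delta_1=1)}\int_u^t dM_{12}^o(r|u)/\{S_1(r|u)K(r|u)\}$.

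The main obstacle is precisely this continuous conditioning. The parameter is not truly pathwise differentiable, and the expression with the indicator over a density is a formal, Dirac-type gradient. It is to be understood as the building block that, once integrated against $du$ (as in $\eta_t$), yields legitimate EIFs via the chain rule. I would make this rigorous by checking that, for any smooth weight $w(u)$, the functional $\int w(u)A_{12}(t|u)\,du$ has the gradient obtained by integrating the displayed formula against $w(u)\,du$. I would also verify the CAR factorization, namely that the censoring scores are orthogonal to the hazard scores, so that no projection onto the censoring tangent space is needed.
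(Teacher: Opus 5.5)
Your proposal is correct. The paper states this lemma without proof, treating it as standard and, in the proof of Theorem~\ref{thm:eif}, relying on Kennedy's discretization ``tips and tricks''. Your argument is exactly that standard calculation: you differentiate $A_{0j}$ as the observed subdistribution over the at-risk probability $S_0K$, and you apply the point-mass conditional-mean derivative to the ratio defining $A_{12}(t|u)$.

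Your caveat that the $A_{12}(t|u)$ expression is only a formal, Dirac-type gradient is accurate. So is your plan to justify it by integrating against $du$-weights, which is how it enters $\eta_t$, where the subdensity $S_0K\alpha_{01}$ cancels to leave the $1/K(U)$ factor. This is consistent with the paper's own remark that the treat-at-$u^*$ estimand is not pathwise differentiable.
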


\noindent{\it Proof of Theorem \ref{thm:eif}}\quad
\vspace{-0.3cm}

Define
\begin{align*}
h_{\eta}^{(1)}(u,t)c^*=&\left\{1-S_1(t|u)
\right\} S_0(u)\\
h_{\mu}^{(1)}(u,t)c^*=&e^{-A_{01}(u_+^*)}
\left\{e^{-A_{02}(u_-^*)}-e^{-A_{02}(t\wedge u_+^*)}\right\}\\
&-\int_{u\vee u_-^*}^{t\wedge u_+^*} S_0(v)dA_{02}(v)
\\
h_{\mu}^{(2)}(u,t)c^*=&I\bigl( u>u_-^*\bigr )H_{01}^*(u)e^{-A_{02}(u)}
-\int_{u\vee u_-^*}^{t\wedge u_+^*} H_{01}^*(v)e^{-A_{02}(v)}dA_{02}(v)
\end{align*}
The claimed result follows from Lemma \ref{basic_eif}, the tips and tricks of \cite{kennedy2022semiparametric} and some algebra.
\hfill $\square$

\noindent{\it Proof of Theorem \ref{thm:remainder}}\quad
\vspace{-0.3cm}

The remainder term is given by 
\begin{equation}
\label{rem-gen}
    R_n=\psi_t(P_n)-\psi_t(P)+E\{D_{\psi_t}^*(O,P_n)\}.
\end{equation}
The key to calculate $E\{D_{\psi_t}^*(O,P_n)\}$ is that, for $j=1,2$,
$$
E\left\{\frac{dM_{0j}^n(u)}{S_0^n(u)K_n(u)}\right\}=-\frac{S_0(u)K(u)}{S_0^n(u)K_n(u)}d\{A^n_{0j}-A_{0j}\}(u)
$$
and
$$
E\left\{\frac{dM_{12}^n(r|u)}{S_1^n(r|u)K_n(r|u)}\right\}=-\frac{S_1(r|u)K(r|u)}{S_1^n(r|u)K_n(r|u)}d\{A^n_{12}-A_{12}\}(r|u)
$$
and if we first assume that $K$ is known, so $K_n=K$, then, after some algebra, it is seen that $R_n$ consists of cross-product terms of the kind
\begin{align*}
&\{S_1^n(t|u)-S_1(t|u)\}\{S_{02}^n(u)-S_{02}(u)\}, \quad \{S_1^n(t|u)-S_1(t|u)\}\{S_{01}^n(u)dA_{01}^n(u)-S_{01}(u)dA_{01}(u)\}\\
&\{S_{02}^n(u)-S_{02}(u)\}\{S_{01}^n(u)dA_{01}^n(u)-S_{01}(u)dA_{01}(u)\}, \quad \{S_{02}^n(u)-S_{02}(u)\}\{dA_{01}^n(u)-dA_{01}(u)\}\\
&\{S_{01}^n(u)-S_{01}(u)\}\{dA_{02}^n(u)-dA_{02}(u)\}
\end{align*}
that are all of low order when Nelson-Aalen estimators are used to estimate $A_{0j}$, $j=1,2.$ Since $K$ is not known we re-write
$$
K/K_n=1-\frac{K_n-K}{K_n}
$$
and from the "1" in the latter display we get the exact same terms as listed above where we took $K$ as known. In addition, we get terms that also involve  $K_n-K$. Specifically, these latter terms are all of the kind
$$
\frac{K_n(u)-K(u)}{K_n(u)}d\{ A_{0j}^n(u)- A_{0j}(u) \},\; j=1,2,
$$
and 
$$
\frac{K_n(r|u)-K(r|u)}{K_n(r|u)}d\{ A_{12}^n(r|u)- A_{12}(r|u) \}.
$$
It is only the latter term that matters because we use Nelson-Aalen estimators to estimate $A_{0j}$, $j=1,2$.
The specific expression of the latter term is 
$$
\int J_t^*(u)\left\{\int_u^t\frac{K_n(r|u)-K(r|u)}{K_n(r|u)}\frac{S_1(r|u)}{S^n_1(r|u)}d\{ A_{12}^n(r|u)- A_{12}(r|u) \}\right\}
S_1^n(t|u)S_0^n(u)dA^n_{01}(u)
$$

\hfill $\square$

\noindent{\it Proof of Theorem \ref{thm:remainder_w_L}}\quad
\vspace{-0.3cm}

It follows directly from the proof of \ref{thm:remainder} that the remainder term consists of 
cross-product terms of the kind
\begin{align*}
&\{S_1^n(t|u,l)-S_1(t|u,l)\}\{S_{02}^n(u|l)-S_{02}(u|l)\}, \; \{S_1^n(t|u,l)-S_1(t|u,l)\}\{S_{01}^n(u|l)dA_{01}^n(u|l)-S_{01}(u|l)dA_{01}(u|l)\}\\
&\{S_{02}^n(u|l)-S_{02}(u|l)\}\{S_{01}^n(u|l)dA_{01}^n(u|l)-S_{01}(u|l)dA_{01}(u|l)\}, \; \{S_{02}^n(u|l)-S_{02}(u|l)\}\{dA_{01}^n(u|l)-dA_{01}(u|l)\}\\
&\{S_{01}^n(u|l)-S_{01}(u|l)\}\{dA_{02}^n(u|l)-dA_{02}(u|l)\}
\end{align*}
and of 
$$
\frac{K_n(u|l)-K(u|l)}{K_n(u|l)}d\{ A_{0j}^n(u|l)- A_{0j}(u|l) \},\; j=1,2,
$$
and 
$$
\frac{K_n(r|u,l)-K(r|u,l)}{K_n(r|u,l)}d\{ A_{12}^n(r|u,l)- A_{12}(r|u,l) \}.
$$
For example, the specific expression of the latter term is 
\begin{align*}
E\int J_t^*(u)&\left\{\int_u^t\frac{K_n(r|u,L)-K(r|u,L)}{K_n(r|u,L)}\frac{S_1(r|u,L)}{S^n_1(r|u,L)}d\{ A_{12}^n(r|u,L)- A_{12}(r|u,L) \}\right\}\\
&\times S_1^n(t|u,L)S_0^n(u|L)dA^n_{01}(u|L)
\end{align*}

\nothere{
The remainder term is given by 
$$
R_n=\psi_t(P_n)-\psi_t(P)+E\{D_{\psi_t}^*(O,P_n)\}.
$$
If the censoring survival function was known then the one-step estimator is consistent and has IF equal to the EIF $D_{\psi_t}^*(O,P)$.
Since, in reality,  the censoring survival function is unknown we need to replace $K$ with some estimate $K_n$.
However, because
$$
K/K_n=1-\frac{K_n-K}{K_n}\approx 1+(\Lambda_C^n-\Lambda_C),
$$
we get from the "1" the results corresponding to the censoring  survival function being known and the last term in the latter display appears then in all the (estimated) martingale terms of $D_{\psi_t}^*(O,P_n)$ leading to the terms given in Theorem \ref{thm:remainder}. We exemplify this with one concrete calculation. Consider the (estimated) martingale
\begin{align*}
  &E\left [\int J^*_t(u)h_{\eta}^{(1),n}(u,t|L)\frac{dM_{01}^{o,n}(u|L)}{S_0^n(u|L)K_n(u|L)}\right ]\\
=&E\left [\int J^*_t(u)h_{\eta}^{(1),n}(u,t|L)\frac{S_0(u|L)K(u|L)}{S_0^n(u|L)K_n(u|L)}d\{A_{01}^n-A_{01}\}(u|L)\right ]\\
=&E\left [\int J^*_t(u)h_{\eta}^{(1),n}(u,t|L)\frac{S_0(u|L)}{S_0^n(u|L)}d\{A_{01}^n-A_{01}\}(u|L)\right ]+\\
&E\left [\int J^*_t(u)h_{\eta}^{(1),n}(u,t|L)\frac{S_0(u|L)\{K_n-K\}(u|L)}{S_0^n(u|L)K_n(u|L)}d\{A_{01}^n-A_{01}\}(u|L)\right ],
\end{align*}
where first term in the last sum of the latter display cancels with a corresponding term in $\psi_t(P_n)-\psi_t(P)$. All martingale terms are handled in the same way and we are left with the remainder terms as stated in Theorem \ref{thm:remainder}.
}
\hfill $\square$

\begin{figure}
\centering
\includegraphics[width=14cm,height=10cm]{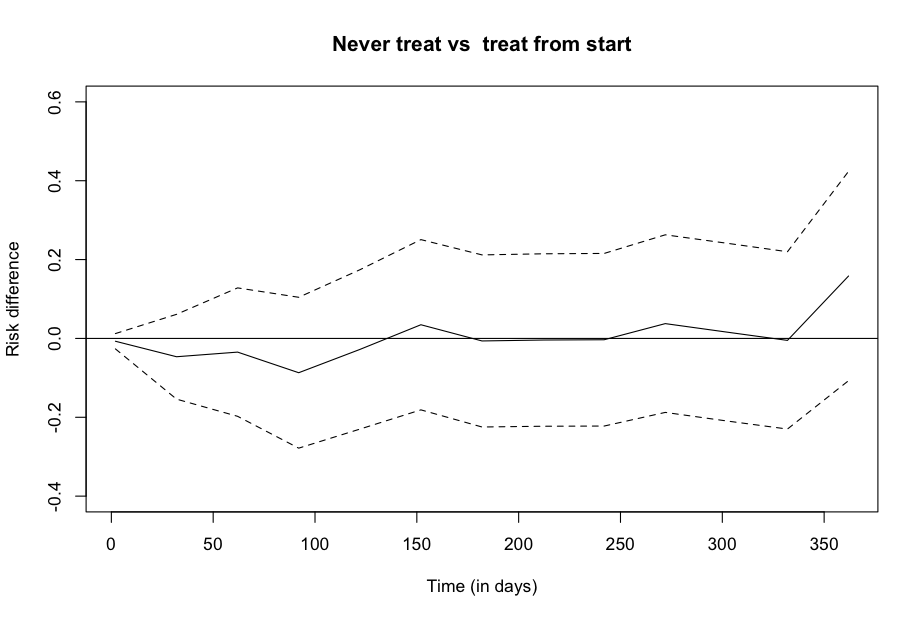}
\caption{Stanford Heart Transplant Data. Risk difference along with 95\%-confidence intervals under the two treatment strategies: "Transplant from start" and "Never Transplant".}
\label{fig:Data_STH0}
\end{figure}

\end{document}